\begin{document}
\newtheorem{thm}{Theorem}
\newtheorem{lem}[thm]{Lemma}
\newtheorem{cor}[thm]{Corollary}
\newtheorem{prop}[thm]{Proposition}

\newtheorem{defn}{Definition}

\newtheorem{rem}{Remark}
\newtheorem{ex}{Example}

\title{Design and Analysis of an Estimation of Distribution Approximation Algorithm for  Single Machine Scheduling  in Uncertain Environments}

\author{Boris Mitavskiy\thanks{Corresponding author.} and Jun He  \\Department of Computer Science\\ Aberystwyth University, Aberystwyth, SY23 3DB, U.K.  }

\maketitle

\begin{abstract}
In the current work we introduce a novel  estimation of distribution algorithm  to tackle a hard combinatorial optimization problem, namely the single-machine scheduling problem, with uncertain delivery times. The majority of the existing research coping with optimization problems in uncertain environment aims at finding a single sufficiently robust solution so that random noise and unpredictable circumstances  would have the least possible detrimental effect on the quality of the solution. The measures of robustness are usually based on various kinds of empirically designed averaging techniques. In contrast to the previous work, our algorithm aims at finding a collection of robust schedules that allow for a more informative decision making. The notion of robustness is measured quantitatively in terms of the classical mathematical notion of a norm on a vector space. We provide a theoretical insight into the relationship between the properties of the probability distribution over the uncertain delivery times and the robustness quality of the schedules produced by the algorithm after a polynomial runtime in terms of approximation ratios.
\end{abstract}

\section{Introduction}\label{IntroductionSect}
Scheduling problems naturally arise in a number of disciplines, for instance, in computer science and in operations research. Most scheduling problems are NP-hard and, therefore, no efficient algorithms exist to find an optimal solution for them unless P=NP (see \cite{hochbaum1996approximation}). In order to obtain some satisfactory solutions to such problems, evolutionary algorithms (EAs) are widely applied. A popular NP-hard scheduling problem will be discussed in the paper, which is the so-called single machine scheduling problem where $n$ jobs, each having their own release times, processing times and delivery times have to be processed on a single machine without preemption.

Current research in EAs for scheduling in uncertain environments  shares the following drawback: the EAs for solving scheduling problems are designed to seek a single solution to scheduling in a random environment based on some kind of a notion of the best suited robust schedule (see surveys~\cite{jin2005evolutionary,bianchi2009survey}). For example, in \cite{goren2008robustness} surrogate notion of \emph{robustness} of a solution to a single machine scheduling problem in uncertain environments has been introduced. According to \cite{goren2008robustness}, ``a schedule whose performance does not significantly degrade in the face of disruption is called robust". Mathematically, robustness of a given schedule constructed on a fixed instance of a problem is compared in terms of expected values of the deviations from the performance of the schedule due to random changes in the environment.

We believe a single solution is not sufficiently informative in the uncertain environments in the sense that it does not reflect the scenario in real world. A single optimal solution only corresponds to a special situation, such as the most probable situation or, even worse, just some kind of an average among all possible situations.

Unlike the previous investigators, we aim to explore an entire set of solutions which correspond to potentially different situations in a random environment. The primary objective of our EA is to produce a population of ordered tuples of the form $(instance, \; schedule, \; other \, information)$ such that

\begin{enumerate}
\item The instances contained within the individuals in the final generation (i.e. after finishing the run of the algorithm) are, in a certain sense to be made precise later, the most suitable instances on which ``robust" schedules should be constructed.

\item  After the EA finishes it's run, with high probability, a randomly selected instance will be ``sufficiently close" to at least one of the instances contained within the individuals in the final population so that the robust schedule constructed on that instance would then be also be sufficiently good for the corresponding randomly sampled instance.

\end{enumerate}

As mentioned in \cite{sotskov2010scheduling}, in uncertain environments there are two phases: one is a training phase when the machine simulates the trials with the aim of making decisions later, while the next phase is the actual decision making in a practical situation. The time complexity during the training phase is not required to be nearly as fast as the time complexity while making a decision on a ``sufficiently good" solution in a practical situation. In the current article, we are going to pin down the crucial properties of the unknown probability distributions for which our EA performs well in polynomial runtime. Furthermore, we explain theoretically how to tune the parameters of our EA given heuristically estimated or expert knowledge about the properties of the unknown probability distributions. An EA of the type we propose is known as an EDA (estimation of distribution algorithm). The main difference between EDAs and most conventional evolutionary algorithms is that evolutionary algorithms generate new candidate solutions using an implicit distribution defined by one or more variation operators, whereas EDAs use an explicit probability distribution encoded by a Bayesian network, a multivariate normal distribution, or another model class. In the current article, the probability distribution guiding the search is the one on the uncertain parameters of the problem: the delivery times in a single machine scheduling problem. In recent years EDAs have gained significant popularity: see, for instance, \cite{santana2009research} and \cite{lozano2006towards}.

Once the EDA has finished running, if we are given polynomially many parallel processors, one can check in linear runtime which individual in the final population contains a ``sufficiently good" schedule. A precise meaning of ``sufficiently good" is provided in terms of the classical notion of approximation ratios (see \cite{auger2011theory}).

The remainder of the paper is organized as follows. In Section~\ref{secReivew}   we provide a substantial literature review related to EA-based applications for scheduling problems in deterministic as well as in uncertain environments.
In Section~\ref{basicPropSect} we introduce the classical single machine scheduling problem. In Section~\ref{EuclidEmbedSect} single machine scheduling problem with uncertain delivery times is introduced and an important theoretical link between the ``maximum of absolute values of coordinates"-norm and the notion of robustness is established. In Section~\ref{algorithmSect}, we describe the EDA in detail. Section~\ref{ProbDistributionAnalysisSect} is devoted to the theoretical analysis of the novel EDA and Section~\ref{concludeSect} summarizes the results and suggests future work.
\section{Literature Review}
\label{secReivew}
\subsection{Literature Review of EA-based Applications to Scheduling problems in Deterministic Environments}\label{LiteratureReviewDetermIntro}
EAs have been widely applied to tackle the single machine scheduling problem for over two decades. For example, an approach for solving the single machine scheduling problem with an objective to minimize the flow time variance based on genetic algorithms has been proposed in \cite{gupta1993minimizing} as far back in time as in the early 1990s. In order to improve the efficiency, most EAs employ different search strategies. For example, a memetic algorithm for the total tardiness single machine scheduling problem with due dates has been proposed in \cite{francca2001memetic}. Several recombination operators, local improvement procedure and several neighborhood reduction schemes have been used. Genetic algorithms  with four types of crossover operators and three types of mutation operators plus local search have been proposed in \cite{sevaux2003genetic}. A discrete differential evolution algorithm has been exploited to tackle the single machine total weighted tardiness problem in \cite{tasgetiren2009discrete}. The algorithm combined constructive heuristics such as the well-known NEH heuristic and greedy randomized adaptive search procedure as well as some priority rules.
Single machine scheduling problem with the objective of minimizing the maximal lateness has been considered in \cite{sels2012hybrid}. A hybrid genetic algorithm with the combination of different local search neighborhood structures has been exploited.

A few articles have been devoted to comparing different types of EAs. They claimed that some EAs are superior to others. Approaches based on evolutionary strategies, simulated annealing and threshold acceptance for solving the problem of scheduling a number of jobs on a single machine against a restrictive common due date have been compared in \cite{feldmann2003single} It has been concluded that a new variant of threshold acceptance is superior to the other approaches. A genetic algorithm, a simulated annealing approach, local search methodology, branch-and-bound algorithm, and ant colony optimization algorithms, for solving a single machine scheduling problem have been compared in \cite{gagne2002comparing}. Their results indicated that ant colony optimization is competitive and has a certain advantage on larger problems. About the only article proposing a rigorously established polynomial-time approximation scheme for a single machine scheduling problem based on an evolutionary algorithm appears in \cite{mitavskiy2012polynomial}.

All of the literature discussed above assumes that all of the parameters (for example, setup times, processing times, releases dates, due dates and other parameters) are deterministic. Nonetheless, in the real world situations, a number of scheduling tasks involve some degree of uncertainty. For example, the delivery time of groceries is affected by uncertain traffic; the arrival time of an airplane is delayed by uncertain weather etc. A more reasonable model would be to consider some of the parameters as the random variables on the probability space of all possible practical scenarios. The relevant literature review appears in the next subsection.
\subsection{Literature Review of EA-based Applications to Scheduling problems in Uncertain Environments}\label{LiteratureReviewUncertainIntro}
In recent years, EAs have been emerging as a successful  tool to cope with optimization problems in uncertain environments. A comprehensive review on the topic appears in \cite{jin2005evolutionary,bianchi2009survey}. Currently there are two main approaches to handle the uncertainty in EAs: one is based on probability theory and the other one on fuzzy set theory.

The more popular approach to deal with uncertainty is based on probability theory. The goal is to optimize the expectation of an objective function with certain random inputs. For example, a genetic algorithm for solving stochastic job-shop scheduling problems has been proposed in \cite{yoshitomi2002genetic}.
\cite{gu2009novel} constructed a parallel quantum genetic algorithm for the stochastic job shop scheduling problem with the objective of minimizing the expected value of makespan, where the processing times are subjected to independent normal distributions.
\cite{wang2009robust}  solved the logistics center location and allocation problem under uncertain environment using enumeration method, and genetic algorithm. Both, the deterministic optimization model and a two-stage stochastic optimization model have been considered. A competitive co-evolutionary quantum genetic algorithm for a stochastic job shop scheduling problem with the objective to minimize the expected value of makespan has been proposed in \cite{gu2010novel}.

Multi-objective EAs are also used in the study of scheduling in uncertain environment. For example, \cite{lei2011simplified} investigated multi-objective stochastic job shop scheduling problem.
\cite{ebrahimi2013hybrid} presented the non-dominated sorting genetic algorithm  for  hybrid flow shop  scheduling with  uncertain due date  in a normal distribution.

Fuzzy set theory provides an alternative way of handling the uncertainty. For example,
\cite{peng2004parallel} presented three types of fuzzy scheduling models for parallel machine scheduling problems with fuzzy processing times.
\cite{fayad2005fuzzy} proposed a multi-objective genetic algorithm to deal with a real-world fuzzy job shop scheduling problem. Fuzzy sets are used to model uncertain due dates and processing times of jobs.
\section{Single Machine Scheduling Problem in Deterministic Environments}\label{basicPropSect}
\subsection{A Preliminary Description of the Single-Machine Scheduling Problem}\label{IntroSubsect}
The presentation in this subsection closely resembles that in~\cite{hochbaum1996approximation}. The single machine scheduling problem is well-known to be NP-hard (see \cite{hochbaum1996approximation}). In the classical form it can be described as follows. Suppose we have a sequence of $n$ jobs $J = \{J_i\}_{i=1}^n$ where each job $J_i$ must be processed without interruption for a time $p_i > 0$ on the same machine. A job $J_i$ is released at time $r_i \geq 0$ associated with it and it becomes available for processing only at the time $r_i$ and any time after as long as the machine is not occupied at the time being. As soon as a job $J_i$ has finished processing it is sent for delivery immediately. Notice that the jobs are allowed to be delivered in parallel without any restriction, yet they can be processed only sequentially. A specific job $J_i$ has its own \emph{delivery time} $q_i$. Here we assume that there is no restriction on the total number of jobs being delivered simultaneously.\footnote{It has been explicitly shown in~\cite{hochbaum1996approximation} that the setting above is equivalent to the model with due dates in place of the delivery times via a simple linear change of variables, yet the model with delivery times is a lot better suitable for the algorithm design and analysis.} Informally speaking, our objective is to find a reordering of the sequence $\{J_i\}_{i=1}^n$ of jobs which minimizes the minimal time when all of the jobs have just been delivered, referred to as the maximal lateness of the schedule. In order to understand the problem better and to design efficient algorithms, it is necessary to describe the notions of ``reordering", ``maximal lateness of a schedule", the ``minimal maximal lateness" of an instance of the single machine scheduling problem, etc. in more detail. This will be the subject of the next subsection.
\subsection{A Detailed Description of the Main Objectives of the Single-Machine Scheduling Problem.}\label{formalDescribeSubsect}
A formal mathematical formulation of reordering can be given in terms of a permutation of the indices, $\pi$ on $\{1, \, 2, \ldots, n\}$. Every such permutation defines a new ordering in which the jobs are to be processed,  $J_{\pi} = \{J_{\pi(i)}\}_{i=1}^n$ (i.e. a new schedule with respect to the originally given schedule determined by the permutation $\pi$). As soon as the job $\pi(1)$ is released at the time $r_{\pi(1)}$ it starts getting processed and it takes time $p_{\pi(1)}$ to process the job $J_{\pi(1)}$. We denote by $s_{\pi(1)} = r_{\pi(1)}$ the \emph{starting time} of the job $J_{\pi(1)}$.\footnote{Notice that it makes no sense to wait as long as both, the machine and the job to be scheduled next, are available} Since the \emph{processing time} of the job is $J_{\pi(1)}$ $p_{\pi(1)}$, the machine is not available until it has finished processing the job $J_{\pi(1)}$ at the time $s_{\pi(1)} + p_{\pi(1)}$. As soon as the job $J_{\pi(1)}$ has been processed, it is immediately sent for delivery and it takes time $q_{p(1)}$ to deliver the job $J_{\pi(1)}$. Thus the job $J_{\pi(1)}$ is delivered at the time $s_{\pi(1)} + p_{\pi(1)} + q_{\pi(1)}$. At the same time as the job $J_{\pi(1)}$ has been sent for delivery, i.e. at the time $s_{\pi(1)} + p_{\pi(1)}$, the machine becomes available and, if the job $J_{\pi(2)}$ is available at the time $s_{\pi(1)} + p_{\pi(1)}$, (i.e. if $r_{\pi(2)} < s_{\pi(1)} + p_{\pi(1)}$) it starts getting processed. Otherwise, the job $J_{\pi(2)}$ starts getting processed at the time $r_{\pi(2)}$. Thus, the \emph{starting time} of the job $J_{\pi(2)}$ is $s_{\pi(2)} = \max\{s_{\pi(1)} + p_{\pi(1)}, \, r_{\pi(2)}\}$. Continuing in this manner recursively, the starting time $s_{\pi(i+1)}$ of the job $J_{\pi(i+1)}$ is $\max\{s_{\pi(i)} + p_{\pi(i)}, \, r_{\pi(i+1)}\}$ while the time when the job $J_{\pi(i)}$ has been delivered is $s_{\pi(i)} + p_{\pi(i)} + q_{\pi(i)}$. In case when $r_{\pi(i+1)} > s_{\pi(i)} + p_{\pi(i)}$ we say that the \emph{machine is idle before the job} $J_{\pi(i+1)}$ \emph{has started processing}. The \emph{maximal lateness} of the schedule $J_{\pi}$ is defined as $J_{\pi}^{\max} = \max_{1 \leq i \leq n}\{s_{\pi(i)} + p_{\pi(i)} + q_{\pi(i)}\}$, i.e. the time when all the jobs have just been delivered. In summary we are given a sequence of $n$ ordered triplets $\{J_i\}_{i=1}^n$ with each $J_i = (r_i, \, p_i, \, q_i)$ that stand for \emph{release time}, \emph{processing time} and \emph{delivery time} of the job $J_i$ respectively. The \emph{minimal maximal lateness} of the instance $J$ of the single-machine scheduling problem, denoted by $J^*$, is defined as $$J^* = \min\{J_{\pi}^{\max} \, | \, \pi \text{ is a permutation on } \{1, \, 2, \ldots n\}\}.$$
The goal of the single-machine scheduling problem can now be reformulated rigorously as follows. We aim to find a permutation $\sigma$ on $\{1, \, 2, \ldots, n\}$ such that  $J_{\sigma}^{\max} = J^*$.

Since the problem is NP-hard, one hopes to find a ``satisfactory" solution in polynomial runtime. The notion of a ``satisfactory" solution is often given in terms of an \emph{approximation ratio up to a given factor} $r \geq 1$ (see, for instance, \cite{hochbaum1996approximation}). This means that we aim to find a schedule $\pi$ such that $\frac{J_{\pi}^{\max}}{J^*} \leq r$. Of course, it is desirable to make $r$ as close to $1$ as practically feasible. Despite the single-machine scheduling problem is NP-hard, it is, as described in \cite{hochbaum1996approximation}, ``one of the easiest NP-hard problems" in the sense that there exists a \emph{polynomial-time approximation scheme} for it. This means that there is an algorithm, call it $\mathcal{A}$, such that $\forall \epsilon > 0$ $\exists$ a sufficiently large $B(\epsilon) \in \mathbb{N}$ with the property that after at most $n^{B(\epsilon)}$ time steps the algorithm $\mathcal{A}$ finds a permutation-schedule $\sigma$ on $\{1, \, 2, \ldots, n\}$ with $\frac{J_{\sigma}^{\max}}{J^{*}} \leq 1+\epsilon$.\footnote{Of course, $B(\epsilon) \rightarrow \infty$ as $\epsilon \rightarrow 0$ (or, equivalently, as the approximation ratio $r = 1+\epsilon \rightarrow 1$).} Several polynomial-time approximation schemes for the single-machine scheduling problem can be found in the literature (see \cite{hochbaum1996approximation} for details). A polynomial time approximation scheme for the single-machine scheduling problem based on an evolutionary algorithm can be found in \cite{mitavskiy2012polynomial}.
\section{Single Machine Scheduling Problem in Uncertain Environments}\label{EuclidEmbedSect}
In the current section, we describe the single machine scheduling problem in uncertain environments and introduce a novel methodology to cope with the uncertainty.
\subsection{Mathematical Framework and Notation}\label{frameworkSubsect}
In this article we will assume that the delivery times are random variables drawn from an unknown probability distribution while the processing times and the release times are fixed throughout the problem. Thus, every instance of the single machine scheduling problem of size $n$, can be written as $\theta = (J(\theta)_1, \, J(\theta)_2, \ldots, J(\theta)_n)$ with $J(\theta)_i = (r_i, \, p_i, \, q(\theta)_i)$ where $r_i$ and $p_i$ are fixed over the entire space of instances of the single-machine scheduling problem of size $n$, while $q(\theta)_i$s are drawn from an unknown probability distribution. Since all instances of size $n$ share common release times and processing times, they are fully determined by the $n$-dimensional \emph{delivery times vectors} or, also referred to as \emph{vectors of delivery times}, $$\vec{q}(\theta) = (q(\theta)_1, \, q(\theta)_2, \ldots, q(\theta)_n) \in [0, \, \infty)^n \subseteq \mathbb{R}^n.$$ The following definition will be convenient throughout the remainder of the article.

\begin{defn}\label{vectorInducedInstanceDefn}
Suppose we are given a vector $\vec{v} = (v_1, \, v_2, \ldots v_n) \in [0, \, \infty)^n$. We will say that the vector $\vec{v}$ induces the instance of the single machine scheduling problem $\theta$ or, alternatively, that the instance of the single machine scheduling problem $\theta$ is induced by the vector $\vec{v}$ if delivery times vector of the instance $\theta$, namely $\vec{q}(\theta) = \vec{v}$. In particular, every instance of the single machine scheduling problem of size $n$ is induced by its vector of delivery times.
\end{defn}

Recall from   Section~\ref{IntroductionSect} that robustness of a schedule plays a crucial role in uncertain environments. Consider now the \\
$\| \cdot \|_{\infty}$ on $\mathbb{R}^n$ that's traditionally defined as the maximal absolute value over all coordinates. More precisely, for a vector $\vec{x} = (x_1, \, x_2, \ldots, x_n) \in \mathbb{R}^n$ $\|\vec{x}\|_{\infty} = \max_{1 \leq i \leq n}|x_i|$. In the next subsection we are going to present two very simple, yet rather powerful facts that relate robustness of schedules to the $\| \cdot \|_{\infty}$.

\subsection{Exhibiting a Strong Link Between Robustness of Schedules and the Infinity Norm}\label{robustLinkSubsect}
In the current subsection we will demonstrate that the minimal maximal lateness, treated as a non-negative valued function from $[0, \, \infty)^n \rightarrow \mathbb{R}$ is Lipschitz-continuous with respect to the $\|\cdot \|_{\infty}$ norm on $\mathbb{R}_n$ and the Lipschitz constant is $1$. Furthermore, the optimal solutions (i.e. permutation-schedules) for a given instance of the single-machine scheduling problem, call it $\theta$, exceed the minimal maximal lateness of any other instance $\gamma$ by at most $\|\vec{q}(\theta) - \vec{q}(\gamma)\|_{\infty}$. Before proceeding any further, we point out the following simple observation that will be useful throughout the current section.

\begin{rem}\label{identicalStartingTimesRem}
Given a permutation-schedule $\pi$ and instances $\theta_1$ and $\theta_2$ of the single-machine scheduling problem having identical release times and processing times for all the jobs, denote the corresponding jobs as $J(\theta_1)_{\pi(i)} = (r_{\pi(i)}, \, p_{\pi(i)}, q(\theta_1)_{\pi(i)})$ and $J(\theta_2)_{\pi(i)} = (r_{\pi(i)}, \, p_{\pi(i)}, q(\theta_2)_{\pi(i)})$. Since the schedule $\pi$ has already been selected, the starting times of all the jobs in the schedule depend only on the release times and the processing times but not on the delivery times according to the definition of the single-machine scheduling problem. It then follows that the starting times $s_{\pi(i)}$ are the same for both instances, $\theta_1$ and $\theta_2$.
\end{rem}

Remark~\ref{identicalStartingTimesRem} motivates the crucial observation mentioned above.

\begin{prop}\label{LipshitzNormProp}
Suppose we are given two instances $\theta_1$ and $\theta_2$ of the single-machine scheduling problem of size $n$ that have identical release dates and processing times for all the jobs, but the delivery times may be distinct. Then it follows that $\forall$ permutation-schedule $\pi$ we have $|J(\theta_1)_{\pi}^{\max} - J(\theta_2)_{\pi}^{\max}| \leq \|\vec{q}(\theta_1) - \vec{q}(\theta_2)\|_{\infty}$. Consequently, if $\sigma$ is a permutation-schedule achieving the maximal lateness of the instance $\theta_1$ then $|J(\theta_1)^{*} - J(\theta_2)_{\sigma}^{\max}| \leq \|\vec{q}(\theta_1) - \vec{q}(\theta_2)\|_{\infty}$ and, symmetrically, if $\rho$ is a permutation-schedule achieving the maximal lateness of the instance $\theta_2$ then $|J(\theta_2)^{*} - J(\theta_1)_{\rho}^{\max}| \leq \|\vec{q}(\theta_1) - \vec{q}(\theta_2)\|_{\infty}$. In particular, $|J(\theta_1)^{*} - J(\theta_2)^{*}| \leq \|\vec{q}(\theta_1) - \vec{q}(\theta_2)\|_{\infty}$.
\end{prop}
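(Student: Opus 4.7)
The plan is to reduce everything to the elementary fact that the max of bounded-distance real sequences are themselves close, i.e.\ $|\max_i a_i - \max_i b_i| \leq \max_i |a_i - b_i|$, and then invoke Remark~\ref{identicalStartingTimesRem} to strip away the release and processing data.

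First I would fix a permutation-schedule $\pi$ and apply Remark~\ref{identicalStartingTimesRem} to conclude that the starting times $s_{\pi(i)}$ are the same for the two instances $\theta_1$ and $\theta_2$, because they depend only on the common $r_{\pi(i)}$ and $p_{\pi(i)}$. Writing $a_i = s_{\pi(i)} + p_{\pi(i)} + q(\theta_1)_{\pi(i)}$ and $b_i = s_{\pi(i)} + p_{\pi(i)} + q(\theta_2)_{\pi(i)}$, by the definition of $J(\theta)_\pi^{\max}$ we have $J(\theta_1)_\pi^{\max} = \max_i a_i$ and $J(\theta_2)_\pi^{\max} = \max_i b_i$, while $|a_i - b_i| = |q(\theta_1)_{\pi(i)} - q(\theta_2)_{\pi(i)}|$. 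Applying the max-Lipschitz inequality and noting that ranging $\pi(i)$ over $\{1,\dots,n\}$ is the same as ranging over $\{1,\dots,n\}$ directly yields the first claim
\[
|J(\theta_1)_\pi^{\max} - J(\theta_2)_\pi^{\max}| \leq \max_{1\leq j \leq n}|q(\theta_1)_j - q(\theta_2)_j| = \|\vec{q}(\theta_1)-\vec{q}(\theta_2)\|_\infty.
\]

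The two consequences about optimal schedules $\sigma$ and $\rho$ are then immediate: specializing the first claim to $\pi = \sigma$ gives $|J(\theta_1)_\sigma^{\max} - J(\theta_2)_\sigma^{\max}| \leq \|\vec{q}(\theta_1)-\vec{q}(\theta_2)\|_\infty$, and since by hypothesis $J(\theta_1)_\sigma^{\max} = J(\theta_1)^*$, we obtain the first stated inequality; the second is obtained by swapping the roles of $\theta_1$ and $\theta_2$ and using $\rho$.

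Finally, for the sharpest consequence $|J(\theta_1)^* - J(\theta_2)^*| \leq \|\vec{q}(\theta_1)-\vec{q}(\theta_2)\|_\infty$, I would use a two-sided optimality argument. Let $\sigma$ be optimal for $\theta_1$ and $\rho$ be optimal for $\theta_2$. Then $J(\theta_1)^* \leq J(\theta_1)_\rho^{\max} \leq J(\theta_2)_\rho^{\max} + \|\vec{q}(\theta_1)-\vec{q}(\theta_2)\|_\infty = J(\theta_2)^* + \|\vec{q}(\theta_1)-\vec{q}(\theta_2)\|_\infty$, and symmetrically $J(\theta_2)^* \leq J(\theta_1)^* + \|\vec{q}(\theta_1)-\vec{q}(\theta_2)\|_\infty$. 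Combining yields the bound. The main obstacle here is essentially notational rather than conceptual: one must be careful that the optimal schedule $\sigma$ for $\theta_1$ is generally \emph{not} optimal for $\theta_2$, which is why the symmetric two-sided comparison through $\rho$ is needed rather than a single application of the first claim.
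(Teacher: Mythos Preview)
Your proof is correct and follows essentially the same approach as the paper: both invoke Remark~\ref{identicalStartingTimesRem} to identify the starting times, then bound the difference of the maxima of the per-job lateness terms by the $\|\cdot\|_\infty$-distance of the delivery-time vectors, and finish the optimal-value statement with the same two-sided comparison through $\sigma$ and $\rho$. The only cosmetic difference is that you package the core step as the general inequality $|\max_i a_i - \max_i b_i| \le \max_i |a_i - b_i|$, whereas the paper proves this inline by fixing the index attaining the maximum.
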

\begin{proof}
Indeed, let $\pi$ be a permutation schedule and consider the job $J_{\pi(i)} = (r_{\pi(i)}, \, p_{\pi(i)}, q(\theta_1)_{\pi(i)})$ which achieves the maximal lateness of the permutation-schedule $\pi$ on the instance $\theta_1$, namely, $J(\theta_1)_{\pi}^{\max}$. According to Remark~\ref{identicalStartingTimesRem}, the starting times of all the jobs, $s_{\pi(i)}$ are also identical. Then, by definition of the maximal lateness of a schedule, it follows that $$J(\theta_1)_{\pi}^{\max} = s_{\pi(i)} + p_{\pi(i)} + q(\theta_1)_{\pi(i)} \overset{\text{by definition of }\| \cdot \|_{\infty}}{\leq} $$$$ s_{\pi(i)} + p_{\pi(i)} + \left(q(\theta_2)_{\pi(i)} + \|\vec{q}(\theta_1) - \vec{q}(\theta_2)\|_{\infty}\right) = $$$$=\left(s_{\pi(i)} + p_{\pi(i)} + q(\theta_2)_{\pi(i)}\right) + \|\vec{q}(\theta_1) - \vec{q}(\theta_2)\|_{\infty} \leq$$$$\overset{\text{by definition of }J(\theta_2)_{\pi}^{\max}}{\leq} J(\theta_2)_{\pi}^{\max} + \|\vec{q}(\theta_1) - \vec{q}(\theta_2)\|_{\infty}.$$ In summary, we have shown that $\forall$ permutation schedule $\pi$ we have
\begin{equation}\label{inequalityInNormProof}
J(\theta_1)_{\pi}^{\max} \leq J(\theta_2)_{\pi}^{\max} + \|\vec{q}(\theta_1) - \vec{q}(\theta_2)\|_{\infty}.
\end{equation}
Interchanging the roles of $\theta_1$ and $\theta_2$ in (\ref{inequalityInNormProof}) immediately implies
\begin{equation}\label{inequalityInNormProof1}
J(\theta_2)_{\pi}^{\max} \leq J(\theta_1)_{\pi}^{\max} + \|\vec{q}(\theta_1) - \vec{q}(\theta_2)\|_{\infty}.
\end{equation}
A combination of  (\ref{inequalityInNormProof}) and (\ref{inequalityInNormProof1}) is precisely the first desired conclusion that $$|J(\theta_1)_{\pi}^{\max} - J(\theta_2)_{\pi}^{\max}| \leq \|\vec{q}(\theta_1) - \vec{q}(\theta_2)\|_{\infty}.$$ The next two conclusions are apparent within the statement of the theorem. To see the last conclusion that $$|J(\theta_1)^{*} - J(\theta_2)^{*}| \leq \|\vec{q}(\theta_1) - \vec{q}(\theta_2)\|_{\infty},$$ let $\sigma$ be a permutation-schedule achieving the maximal lateness of the instance $\theta_1$ and $\rho$ be a permutation-schedule achieving the maximal lateness of the instance $\theta_2$, as in the conditions of Proposition~\ref{LipshitzNormProp}. Recall that from the previous two conclusions we have $$|J(\theta_1)^{*} - J(\theta_2)_{\sigma}^{\max}| \leq \|\vec{q}(\theta_1) - \vec{q}(\theta_2)\|_{\infty}$$ and $$|J(\theta_2)^{*} - J(\theta_1)_{\rho}^{\max}| \leq \|\vec{q}(\theta_1) - \vec{q}(\theta_2)\|_{\infty}$$ together with the definition of the minimal maximal lateness it follows that $$J(\theta_2)^{*} \leq J(\theta_2)_{\sigma}^{\max} \leq J(\theta_1)^{*} + \|\vec{q}(\theta_1) - \vec{q}(\theta_2)\|_{\infty}$$ and, symmetrically, $$J(\theta_1)^{*} \leq J(\theta_1)_{\rho}^{\max} \leq J(\theta_2)^{*} + \|\vec{q}(\theta_1) - \vec{q}(\theta_2)\|_{\infty}$$ so that the last conclusion follows at once.
\end{proof}
Proposition~\ref{LipshitzNormProp} is a cornerstone behind the design of the EDA in Section~\ref{algorithmSect}. It provides a quantitative measure of robustness of a schedule on an instance of the single-machine scheduling problem in terms of the $\| \cdot \|_{\infty}$. Indeed, if a permutation-schedule $\pi$ has been constructed on an instance $\theta$ of the single machine scheduling problem with an approximation ratio $r \geq 1$ (recall from Subsection~\ref{formalDescribeSubsect} that $J(\theta)_{\pi}^{\max} \leq r \cdot J(\theta)^{*}$), then, for any other instance $\rho$ the delivery times vector of which is at most $\epsilon$-away from $\vec{q}(\theta)$ as measured by the $\| \cdot \|_{\infty}$, the same schedule $\pi$ produces an approximation at least as good as $$J(\rho)_{\pi}^{\max} \leq J(\theta)_{\pi}^{\max} + \epsilon \leq r \cdot J(\theta)^{*} + \epsilon \leq r \cdot (J(\rho)^{*} + \epsilon) + \epsilon.$$

Apparently it is not possible to construct schedules with good approximation ratios for every instance of the single-machine scheduling problem since there are infinitely many of them. In fact, even if we assume that the delivery times vectors are contained within the $n$-dimensional cube of the form $[0, \, M]^n$ with $M \in (0, \, \infty)$, it would require $2^n$ cubes with side lengths $\frac{M}{2}$ to cover the cube $[0, \, M]^n$. It might often happen in practice, nonetheless, that the vectors of delivery times are most likely to arise within only polynomially many relatively small-sized $n$-dimensional cubes contained inside the big cube $[0, \, M]^n$. In view of Proposition~\ref{LipshitzNormProp}, it would then be preferable to discover the centers of these cubes and to construct near-optimal schedules for the instances of the single-machine scheduling problem induced by these centers in the sense of Dection~\ref{vectorInducedInstanceDefn}. A naturally arising question is then how do we search for these centers efficiently? From the probability-theoretic point of view, the ``small-sized" cubes where the vectors of delivery times most likely occur, correspond to the events happening with relatively high probability with respect to the unknown probability distribution on the vectors of delivery times, while the centers of such cubes are the expectations of the conditional probability distributions conditioned on these events. Throughout the paper we will informally refer to these centers as \emph{local averages}. Briefly speaking, our EDA is designed to implement a statistical sampling with the aim of approximating the local averages with high probability. At the same time, the stochastic information contained within the individuals in the final generation of the EDA allows us to estimate the entire unknown probability distribution over the vectors of delivery times. We finish the section with the following very simple lemma that plays an important motivational role in the design of our EDA.

\begin{lem}\label{convexInside}
Consider $\mathbb{R}^n$ equipped with a norm $\| \cdot \|$. Let $\epsilon > 0$ and suppose we are given a sequence of vectors $\{\vec{u}_i\}_{i=1}^n$ in $\mathbb{R}^n$ such that $\forall \, i$ and $j$ with $1 \leq i \leq j \leq n$ we have $\|\vec{u}_i - \vec{u}_j\| \leq \epsilon$. Then, given any convex combination of vectors in the sequence $\{\vec{u}_i\}_{i=1}^n$, i.e. a vector $\vec{v} = \sum_{i=1}^n r_i \vec{u}_i$ where each $r_i \in [0, \, 1]$ and $\sum_{i=1}^n r_i = 1$, the following is true: $\forall \, i \in \{1, \, 2, \ldots, n\}$ we have $\|\vec{u}_i - \vec{v}\| \leq (1-r_i) \epsilon \leq \epsilon$.\footnote{Evidently, Proposition~\ref{convexInside} is valid for an arbitrary normed vector space in place of $\mathbb{R}^n$.}
\end{lem}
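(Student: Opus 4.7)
The plan is to exploit the simple identity that, because $\sum_{j=1}^n r_j = 1$, one can rewrite $\vec{u}_i$ itself as $\sum_{j=1}^n r_j \vec{u}_i$. This is the key algebraic move: it turns the difference $\vec{u}_i - \vec{v}$ into a convex-like combination of pairwise differences of the $\vec{u}_j$, at which point the pairwise bound in the hypothesis can be applied term by term.

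Concretely, I would fix an index $i \in \{1, 2, \ldots, n\}$ and write
\[
\vec{u}_i - \vec{v} \;=\; \Bigl(\sum_{j=1}^n r_j\Bigr)\vec{u}_i - \sum_{j=1}^n r_j \vec{u}_j \;=\; \sum_{j=1}^n r_j \bigl(\vec{u}_i - \vec{u}_j\bigr).
\]
Then I would apply the triangle inequality (valid in any normed space, which justifies the footnote) to obtain
\[
\|\vec{u}_i - \vec{v}\| \;\leq\; \sum_{j=1}^n r_j\, \|\vec{u}_i - \vec{u}_j\|.
\]
The $j = i$ term in this sum contributes $r_i \cdot 0 = 0$, while for every $j \neq i$ the hypothesis gives $\|\vec{u}_i - \vec{u}_j\| \leq \epsilon$. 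Therefore the right-hand side is at most $\epsilon \sum_{j \neq i} r_j = \epsilon(1 - r_i)$, which is the desired bound, and the final inequality $(1 - r_i)\epsilon \leq \epsilon$ follows from $r_i \geq 0$.

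There is no real obstacle here: the lemma is a one-step consequence of the triangle inequality combined with the re-expression of $\vec{u}_i$ as a trivial convex combination of copies of itself. The only mild subtlety worth flagging is that the hypothesis is stated only for $i \leq j$, but since norms are symmetric under sign change, $\|\vec{u}_i - \vec{u}_j\| = \|\vec{u}_j - \vec{u}_i\|$, so the bound holds for all pairs. Nothing about the proof uses any special structure of $\mathbb{R}^n$ beyond the axioms of a norm, which is why the statement transfers verbatim to an arbitrary normed vector space as claimed in the footnote.
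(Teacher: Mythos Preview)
Your proof is correct and is essentially identical to the paper's own argument: the paper also rewrites $\vec{u}_i = \sum_{j} r_j \vec{u}_i$, applies the triangle inequality to $\sum_j r_j(\vec{u}_i-\vec{u}_j)$, drops the $j=i$ term, and bounds the rest by $(1-r_i)\epsilon$. Your remark about extending the pairwise bound from $i\leq j$ to all pairs via $\|\vec{u}_i-\vec{u}_j\|=\|\vec{u}_j-\vec{u}_i\|$ is a small clarification the paper leaves implicit.
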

\begin{proof}
Indeed, given an $i \in \{1, \, 2, \ldots, n\}$, we have $$\|\vec{u}_i - \vec{v}\| = \left \|\vec{u}_i - \sum_{j=1}^n r_j \vec{u}_j \right \| =$$$$= \left \|\sum_{j=1}^n r_j \vec{u}_i - \sum_{j=1}^n r_j \vec{u}_j \right \| = \left \|\sum_{j=1}^n r_j \left(\vec{u}_i - \vec{u}_j \right)\right \| \leq$$$$\leq \sum_{j=1}^n r_j \|\vec{u}_i - \vec{u}_j\| \leq \left(\sum_{j=1, \, j \neq i}^n r_j\right) \epsilon = (1-r_i)\epsilon \leq \epsilon,$$
which proves the lemma.
\end{proof}
\section{EDA motivated by the Theoretical Observations}\label{algorithmSect}
The design of the EDA appearing below is largely based on the novel theoretical developments in Section~\ref{EuclidEmbedSect}. First of all, select a constant $\epsilon >0$. detailed discussion regarding the choice of $\epsilon$ with respect to a probability distribution on the delivery times vectors and the desired approximation ratio of the potential solutions will be provided in the next section. For now we just include a small hint that the choice of $\epsilon$ is largely related to the side lengths of the $n$-dimensional cubes where the delivery time vectors are most likely to occur.

The search space $\Omega$ consists of three types of individuals. The \emph{regular} individuals in the search space of our EDA are sequences of vectors in $\mathbb{R}^n$, $\{\vec{q}(\theta_i)\}_{i=1}^l$, such that $\|\vec{q}(\theta_i) - \vec{q}(\theta_j)\|_{\infty} \leq \epsilon$. The idea is that the regular individuals collect samples from high-concentration events. In particular, it will allow us to estimate the ``local averages" of the probability distribution modeling the uncertainty.

Apart from regular individuals, every generation of the EDA stores a \emph{counter} individual that is an ordered pair of the form $\left(\left\{\left(\vec{q}(\lambda_i), \, k_i \right)\right\}_{i=1}^m, \, K \right)$ consisting of a sequence of ordered pairs of the form $\left\{\left(\vec{q}(\lambda_i), \, k_i \right)\right\}_{i=1}^m$ where $\vec{q}(\lambda_i) \in [0, \infty)^n$ are vectors of delivery times, while $k_i \in \mathbb{N}$ and a \emph{normalizing integer} $K = \sum_{i=1}^m k_i$. The idea is that $k_i$ keeps track of how many times the vector $\vec{q}(\lambda_i)$ has been encountered during the sampling process while $K$ is, evidently, the overall total number of sampling attempts.

In the last generation (when we stop the algorithm), every regular individual is \emph{finalized} and becomes a final individual. A \emph{final} individual is an ordered 5-tuple of the form $\left(\{\vec{q}(\theta_i)\}_{i=1}^l, \, \{t_i\}_{i=1}^l, \, \vec{q}(\overline{\theta}), \, \pi, \, r \right)$ where $\{\vec{q}(\theta_i)\}_{i=1}^l$ is a regular individual, $\{t_i\}_{i=1}^l$ is a sequence of rational coefficients in $[0, \, 1]$ such that $\sum_{i=1}^l t_i = 1$, $\vec{q}(\overline{\theta}) = \sum_{i=1}^l t_i \vec{q}(\theta_i)$ is a delivery times vector that is a convex combination of the delivery times vectors in the sequence $\{\vec{q}(\theta_i)\}_{i=1}^l$, $\pi$ is a permutation schedule and $r \in [0, 1]$ is a rational number. The main idea is that the delivery times vector $\vec{q}(\overline{\theta})$ is an estimated ``local average" of the delivery times vectors contained in the sequence representing the corresponding regular individual. The permutation schedule $\pi$ is constructed via a known polynomial-time approximation scheme (such as one in \cite{mitavskiy2012polynomial}) for the instance $\overline{\theta}$ of the single-machine scheduling problem. As discussed in the previous section, thanks to Lemma~\ref{LipshitzNormProp}, the permutation schedule $\pi$ is a ``sufficiently good" approximation schedule for all the instances falling within the $\epsilon$-neighborhood of the instance $\overline{\theta}$. According to Proposition~\ref{convexInside}, all of the instances in the sequence contained within the corresponding regular individual are within the $\epsilon$-neighborhood of the delivery times vector $\vec{q}(\overline{\theta})$ so that, in particular, the permutation-schedule $\pi$ is sufficiently good for these instances. In other words, the permutation schedule $\pi$ may be considered to robust.

For a wide class of probability distributions that have polynomially many events happening with high concentration, as discussed informally in the previous section and will be described rigorously in the next section, after a polynomially sized runtime before finalizing, a randomly sampled individual will fall within the $\epsilon$-neighborhood of the delivery times vector $\vec{q}(\overline{\theta})$ of one of the final individuals with high probability. The rational number $r \in [0, 1]$ represents an estimated probability that a randomly sampled vector of delivery times $\vec{q}(\lambda)$ falls within the $\epsilon$ neighborhood of the delivery times vector $\vec{q}(\overline{\theta})$ while each $t_i$ is an estimated conditional probability of the delivery times vector $\vec{q}(\theta_i)$ with respect to the high-concentration event the estimated local average of which is the delivery times vector $\vec{q}(\overline{\theta})$. Thereby, the population in the last generation is meant to contain all the necessary information to estimate the entire probability distribution over the vectors of delivery times. We now proceed with the detailed exposition. The Description of the EDA as well as a detailed description of every one of its subroutines (namely, theoretically guided initialization, mutation, and finalizing regular individuals) appears in Algorithm~\ref{algorithm}.

\begin{algorithm}
\caption{EDA for Single Machine Scheduling with Random Delivery Times}
\label{algorithm}
\begin{algorithmic}[1]
 \STATE Initialize a population $Pop$ as follows: First, set the population size to $2$. Sample a vector of delivery times $\vec{q}(\lambda)$. Initialize a counter individual $I_0$ and a regular individual $I_1$  as described in the subroutines below and let $Pop = \{I_0, \, I_1\}$.

 \WHILE {The total number of generations is smaller than $T + 1 \in \mathbb{N}$}
 \STATE Sample a vector of delivery times $\vec{q}(\lambda)$ with respect to the probability distribution over the delivery times vectors.
 \STATE Set a counting index $j=1$
 \WHILE{$0 \leq j < L$}
 \STATE Mutate the individual $I_j$ with respect to the instance $\lambda$ as described in the subroutines above.
 \ENDWHILE
 \IF{(none of the individuals has been mutated (more precisely, if every individual $I_j$ for $0 \leq j < L$ returns a boolean value ``false")}
 \STATE Increment the population size $L := L+1$ by one and initialize a regular individual $I_{L-1}$ with respect to the vector of delivery times $\vec{q}(\lambda)$ and add it to the population.
 \ENDIF
 \ENDWHILE
 \STATE Finalize everyone of the regular individuals according to the subroutine described above and output the population $Pop$.
\end{algorithmic}
\end{algorithm}

The subroutines of the algorithm are described as follows:
\begin{itemize}

\item
\textbf{Initialization of a regular individual with respect to a vector $\vec{q}(\lambda)$ of delivery times:} Given a vector $\vec{q}(\lambda)$ of delivery times for a sampled instance of the single-machine scheduling problem of size $n$, set $l = 1$ and let the new individual be the sequence $\{\vec{q}(\theta_i)\}_{i=1}^l$ where $\theta_1 = \lambda$.

\item
\textbf{Initialization of the counter individual with respect to a vector $\vec{q}(\lambda)$ of delivery times:} Given a vector $\vec{q}(\lambda)$ of delivery times for a sampled instance of the single-machine scheduling problem of size $n$, set $m = 1$ and $k_1 = 1$ and let the counter individual be the sequence $\{(\vec{q}(\theta_i), \, k_1)\}_{i=1}^l$ where $\theta_1 = \lambda$.

\item
\textbf{Mutation of a regular individual with respect to a vector $\vec{q}(\lambda)$ of delivery times:} Given a regular individual $\{\vec{q}(\theta_i)\}_{i=1}^l$ and a vector $\vec{q}(\lambda)$ of delivery times for a sampled instance of the single-machine scheduling problem of size $n$,  the mutation of  a regular individual is described as follows.

\begin{algorithmic}
\STATE set a boolean variable $mutation = false$;
\IF{ $\text{ for some } i \in \{1, \, 2, \ldots, l\}$ the following is true: $\left( \left[\vec{q}(\theta_i) = \vec{q}(\lambda) \right] OR \left[\|\vec{q}(\theta_i) - \vec{q}(\lambda)\|_{\infty} > \epsilon \right] \right)$} \STATE do nothing;

\ELSE
\STATE increment $l := l+1$ by one and let $\vec{q}(\theta_l) = \vec{q}(\lambda)$ and set $mutation = true$;
\ENDIF
\RETURN $mutation$.
\end{algorithmic}

\item
\textbf{Mutation of the counter individual with respect to a vector $\vec{q}(\lambda)$ of delivery times:} Given the counter individual of the form $\left(\left\{\left(\vec{q}(\lambda_i), \, k_i \right)\right\}_{i=1}^m K\right)$ and a vector $\vec{q}(\lambda)$ of delivery times for a sampled instance of the single-machine scheduling problem of size $n$, the mutation of   the counter individual is described as follows.

\begin{algorithmic}
\STATE set a boolean variable $addOrNot = false$.;
\IF{ $\left( \vec{q}(\lambda_i) = \vec{q}(\lambda) \text{ for some } i \in \{1, \, 2, \ldots l\} \right)$ }
\STATE increment $k_i := k_i + 1$ by one;

\ELSE
\STATE increment $m := m + 1$ by one and add the ordered pair $\left(\vec{q}(\lambda_m), \, k_m \right)$ where $\vec{q}(\lambda_m) = \vec{q}(\lambda)$ and $k_m = 1$ to the sequence representing the counter individual. reset the boolean value $addOrNot := true$;
\ENDIF

\STATE increment $K := K+1$ by one;

\RETURN $addOrNot$.
\end{algorithmic}

In the case the returned boolean value after mutation of a regular or a counter individual is ``true" we will say that the individual \emph{has been mutated}. Every non-terminal population in our EDA contains a unique counter individual and a variable number of regular individuals. Once we decide to stop running the EDA, every one of the regular individuals is finalized according to the subroutine described below.

\item
\textbf{Finalizing a regular individual in a population:} Suppose we are given a population containing the counter individual $count = \left(\left\{\left(\vec{q}(\lambda_i), \, k_i \right)\right\}_{i=1}^m, \, K \right)$ and a regular individual $regular = \{\vec{q}(\theta_i)\}_{i=1}^l$. For every vector of delivery times $\vec{q}(\theta_i)$ find the ordered pair $\left(\vec{q}(\lambda_{j(i)}), \, k_{j(i)} \right)$ within the $count$ individual such that $\vec{q}(\theta_i) = \vec{q}(\lambda_{j(i)})$. (From the construction of the EDA above, it will be apparent that there always exists a unique ordered pair with this property in the counter individual.) Set the normalizing constant $N = \sum_{i=1}^l k_{j(i)}$. For every $i \in \{1, \, 2, \ldots, l\}$ let $$t_i = \frac{k_{j(i)}}{N}; \; \vec{q}(\bar{\theta}) = \sum_{i=1}^l t_i \cdot \vec{q}(\theta_i) \text{ and } r = \frac{N}{K}.$$ Construct a permutation schedule $\pi$ for the instance $\bar{\theta}$ according to a polynomial-time approximation scheme (\cite{mitavskiy2012polynomial}) with an appropriately selected approximation ratio (the details will be discussed in the next section). Form the corresponding final individual $\left(\{\vec{q}(\theta_i)\}_{i=1}^l, \, \{t_i\}_{i=1}^l, \, \vec{q}(\overline{\theta}), \, \pi, \, r \right)$.
\end{itemize}

In the upcoming section we are going to study the types of probability distributions on $\mathbb{R}^n$ for which the EDA described above performs well. If one has some heuristic knowledge or, at least a guess, about the basic properties of an ``algorithm-pleasable" probability distribution, it is not hard to tune the approximation-related parameter $\epsilon$, and the total runtime $T$ before finalizing the regular individuals so that the population in the last generation achieves the following objectives stated informally below:
\begin{itemize}

\item \textbf{Objective 1:} With high probability a randomly sampled vector of delivery times $\vec{q}(\gamma)$ will fall within the $\epsilon$-neighborhood of the delivery times vector $\vec{q}(\overline{\theta})$ of at least one final individual $\left(\{\vec{q}(\theta_i)\}_{i=1}^l, \, \{t_i\}_{i=1}^l, \, \vec{q}(\overline{\theta}), \, \pi, \, r \right)$ contained within the population in the last generation. Provided that the schedule $\pi$ has been constructed within an appropriate approximation ratio for the instance $\overline{\theta}$ and the approximation parameter $\epsilon$ has been selected wisely, Proposition~\ref{LipshitzNormProp} will then tell us that the schedule $\pi$ also provides a solution for the instance $\gamma$ within a desirable approximation ratio. Notice that computing $\|\vec{q}(\gamma) - \vec{q}(\theta_i)\|_{\infty}$ can be carried out in linear time (in fact, in constant time if one has sufficiently many parallel processors).

\item \textbf{Objective 2:} The stochastic information contained within the final individuals in the population at the last generation allows us to estimate the entire probability distribution over the vectors of delivery times very much as follows. Let's say the population at the last generation is of the form
\begin{equation}\label{finalPopExEq}
\text{Pop} = (I_0, \, I_1, \, I_2, \ldots, I_{L})
\end{equation}
where $I_0$ is a counter individual of the form $I_0 = \left(\left\{\left(\vec{q}(\lambda_i), \, k_i \right)\right\}_{i=1}^m, \, K \right)$ and, for $1 \leq j \leq L$, the final individuals, $I_j$, of the form $I_j = \left(\{\vec{q}(\theta_i^j)\}_{i=1}^{l(j)}, \, \{t_i^j\}_{i=1}^{l(j)}, \, \vec{q}(\overline{\theta}_j), \, \pi_j, \, r_j \right)$. For $j \in \{1, \, 2, \ldots, L\}$, consider now the event $E_j$ that $\forall \, i \in \{1, \, 2, \ldots, l(j)\}$ a randomly sampled vector of delivery times $\vec{q}(\lambda)$ falls within the $\epsilon$-neighborhood of the vector $\vec{q}(\theta_i^j)$. Then $Pr(E_j) \approx r_j$. In other words, the probability that a randomly sampled vector of delivery times $\vec{q}(\lambda)$ falls within the $\epsilon$-neighborhood of the delivery times vector $\vec{q}(\overline{\theta}_j)$ contained within the individual $I_j$ is well-approximated by $r_j$. Furthermore, the vector $\vec{q}(\overline{\theta}_j)$ is very close to the mean (or expectation) of the multivariate conditional distribution over the delivery times vectors conditioned on the event $E_j$. Consider now the discrete multivariate distribution on the $\epsilon$-neighborhood of the vector $\vec{q}(\overline{\theta}_j)$, call it $CondDistrib(j)$, which assigns the probability $t_i^j$ to the vector of delivery times $\vec{q}(\theta_i^j)$. We then hope to deduce that the multivariate conditional probability distribution over the vectors of delivery times conditioned on the event $E_j$, is well-approximated by the discrete multivariate distribution $CondDistrib(j)$ on the $\epsilon$-neighborhood of the vector $\vec{q}(\overline{\theta}_j)$ via an appropriate statistical parameter fitting. Probably the simplest parameter fitting would be to assume that the conditional distributions are reasonably close to Gaussians. In this case all we need to estimate is the covariance matrix, and this is easily done by computing the covariance matrix of the corresponding multivariate distribution  $CondDistrib(j)$.
\end{itemize}

\section{Theoretical Analysis of the Poposed EDA}\label{ProbDistributionAnalysisSect}
\subsection{Mathematical Framework, Assumptions on the Probability Distributions over the Vectors of Delivery Times and Interpretive Discussion}\label{introToTechnicalSubsect}
Continuing with the notation in the previous section, we will assume that the multivariate probability distribution $D_n$ on the vectors of delivery times for an instance of size $n$ of the single-machine scheduling problem satisfies the following properties. First of all, since delivery times are always non-negative, it makes sense that the corresponding probability distributions are concentrated in the non-negative semi-space, $\prod_{i=1}^n [0, \infty)$. Furthermore, the following assumption on the probability distribution $D_n$ will be made.

\begin{itemize}
\item \textbf{Assumption 1:} For technical reasons, we will also assume that the delivery times can not be uncontrollably large in the sense that $\exists$ a polynomially growing sequence $\{M_n\}_{n=1}^{\infty} \in [0, \infty)$ (in other words, $M_n = O(n^d)$ for some $d \in \{0, \infty\}$) such that $q_i \leq M_n$ $\forall \, i \in \{1, \, 2, \ldots, n\}$ with probability $1$. For the sake of concreteness, let's say that $M_n \leq const_2 \cdot n^d$ for some $const_2 \in (0, \infty)$.

\item
\textbf{Assumption 2:} We assume that for the search space of instances of the single-machine scheduling problem of size $n$, there are positive bounds $\epsilon > 0$ and $\beta > 0$ and polynomially many events $E_1, \, E_2, \ldots E_{f(n)}$ (i.e. $f(n) = O(n^c)$ for some constant $c \in [0, \infty)$ independent of $n$) such that each of the events $E_i$ is contained within an $n$-dimensional cube $E_i \subseteq \prod_{j=1}^n [e_j^i - \frac{\epsilon}{2}, \, e_j^i + \frac{\epsilon}{2}]$ having side lengths of at most $\epsilon$ and, with high probability, for the sake of concreteness, let's say, with probability at least $1 - O\left(f(n) \cdot \exp({-\beta \cdot n})\right)$, a randomly sampled vector of delivery times $\vec{q}(\lambda) \in \bigcup_{i=1}^{f(n)} E_i$ lies within at least one of the events $E_i$. Let $const_1 \in (0, \infty)$ be selected so that for all sufficiently large $n$ we have $f(n) \leq const_1 \cdot n^c$. Finally, we also assume that all of the events $E_i$ have asymptotically similar probabilities.\footnote{Notice that $\beta$ does not depend on the problem size $n$} More precisely $\min_{i=1}^{f(n)}Pr(E_i) = \Omega\left(\frac{1}{f(n)}\right)$. In other words, once again, we can select a constant $const > 0$ so that $\min_{j = 1}^n Prob(E_j) \leq \frac{const}{f(n)}$.

 \end{itemize}

\begin{rem}\label{assumptionRem}
Notice that the assumption above immediately implies that $$\sum_{i=1}^{f(n)}Pr(E_i) \geq Pr\left(\bigcup_{i=1}^{f(n)} E_i \right) =$$$$= 1 - Pr\left(\overline{\bigcup_{i=1}^{f(n)} E_i} \right) \geq 1 - O\left(f(n) \cdot \exp({-\beta \cdot n})\right).$$
\end{rem}

Informally speaking, the assumption above says that the probability distribution over the vectors of delivery times is highly concentrated within the union of polynomially many cubes the side lengths of which are bounded by a constant (the constant may depend on the problem size $n$ which is also the dimension of the delivery times vectors and the cubes). As suggested within objective 2 at the end of the previous section, the population in the last generation is expected to contain at least $f(n)$ final individuals: one individual for every event $E_i$. The counter individual will help us to estimate the actual probabilities of every one of the events $E_i$. Moreover, together with the information contained in the final individual corresponding to an event $E_i$, we can also estimate the conditional probability distribution conditioned on the event $E_i$. The remainder of the current section is devoted to answering the following crucial questions.

\begin{itemize}
\item \textbf{Question 1:} For how long do we need to run the EDA prior to finalizing the individuals?

\item \textbf{Question 2:} What is the quality of the permutation-schedules we obtain in terms of the approximation ratios?
\end{itemize}

The main theorem of the current section that addresses the two questions of central importance stated above is formulated quantitatively according to the following scheme: ``Select a small constant $\delta > 0$. Under the assumptions 1 and 2 above, given that we run the EDA presented in Section~\ref{algorithmSect} for at least polynomially many time steps with respect to the problem size $n$, but not with respect to $\delta$, $T(n, \, \delta)$, prior to finalizing the individuals, the probability that certain undesirable events $U_1$ or $U_2(\delta)$ take place is exponentially small...". In order to alleviate the complexity of formal presentation, we define the undesirable events $U_1$ and $U_2(\delta)$ below.

\begin{defn}\label{auxiliary1Defn}
Suppose that the search space of instances of the single machine scheduling problem of size $n$ satisfies assumption 2 above. Let $T \in \mathbb{N}$ and suppose that we run the EDA described in Section~\ref{algorithmSect} for some $t \geq T$ time steps prior to finalizing the regular individuals. We say that an \emph{event} $E_j$ \emph{corresponds to a final individual of the form} $\left(\{\vec{q}(\theta_i)\}_{i=1}^l, \, \{t_i\}_{i=1}^l, \, \vec{q}(\overline{\theta}), \, \pi, \, r \right)$ if $\forall \, i \in \{1, \, 2, \ldots, l\}$ $\vec{q}(\theta_i) \in E_j$. The undesirable event $U_1$ is then defined to be the event that at least one of the events $E_j$ does not correspond to any of the final individuals. Equivalently, $U_1$ is the event that $\exists \, j \in \{1, \, 2, \ldots f(n)\}$ such that neither one of the final individuals corresponds to the event $E_j$. Now let $\delta > 0$. We define $U_2(\delta)$ to be the event that if a final individual corresponding to an event $E_j$ is of the form $\left(\{\vec{q}(\theta_i)\}_{i=1}^l, \, \{t_i\}_{i=1}^l, \, \vec{q}(\overline{\theta}), \, \pi, \, r \right)$, then the $\| \cdot \|_{\infty}$ of the difference between an estimated average of the conditional probability distribution over the event $E_i$, $\vec{q}(\overline{\theta})$, and the actual average of the conditional probability distribution over the event $E_i$, call it $\vec{\mu}_i$, is at least $\delta$: i.e. $\|\overline{\theta} - \vec{\mu}_i\|_{\infty} \geq \delta$.
\end{defn}

Clearly, the complement of the undesirable event $U_1$ corresponds to achieving objective 1 while the complement of the undesirable event $U_2(\delta)$ is necessary to achieve objective 2, stated at the end of the previous section. We are now ready to state and to interpret the main result of the current section.
\subsection{The Statement and Interpretation of the Main Theorem}\label{stateInterpretSubsect}
The following result addresses Questions 1 and 2 stated in the previous subsection.

\begin{thm}\label{EstDistribObjectiveLem}
Suppose the EDA described in Section~\ref{algorithmSect} runs on instances of the single-machine scheduling problem of size $n$ in uncertain environments satisfying assumptions 1 and 2 in Section~\ref{ProbDistributionAnalysisSect}. Recall from the assumptions 1 and 2 that the upper bound on the maximal length of every delivery time vector with respect to $\| \cdot \|_{\infty}$, $M_n \leq const_2 \cdot n^d$, $\min_{j = 1}^{f(n)} Prob(E_j) \leq \frac{const}{f(n)}$ and the total number of the events $E_j$ happening with substantial probability is $f(n) \leq const_1 \cdot n^c$. Select constants $\delta$ and $\alpha \in (0, 1)$ and $l \in (0, \, \infty)$.
Now let
\begin{equation}\label{runtimeDefnEq}
T = \frac{const_1}{(1-\alpha) \cdot const} \cdot n^{2d +l + c}.
\end{equation}
Suppose we run the EDA described in Section~\ref{algorithmSect} for a time $t \geq T$ and then finalize all of the regular individuals in the population. Then the probability of the union $U = U_1 \cup U_2$ of the two events $U_1$ and $U_2(\delta)$ introduced in Dection~\ref{auxiliary1Defn}, is at most $$const_1 \cdot n^c \exp \left( -n^{2d + l} \frac{\alpha^2}{2(1-\alpha)}\right) + 2n \exp\left(n^{-\frac{2 \delta^2}{(const_2)^2}l}\right).$$
Furthermore, consider the final individual of the form $\left(\{\vec{q}(\theta_i)\}_{i=1}^l, \, \{t_i\}_{i=1}^l, \, \vec{q}(\overline{\theta}), \, \pi, \, r \right)$ corresponding to the event $E_j$ in the sense of Dection~\ref{auxiliary1Defn}. Suppose that the permutation schedule $\pi$ has been constructed for the instance of the single-machine scheduling problem via a polynomial-time approximation scheme such as in \cite{mitavskiy2012polynomial}. Then $\forall \, $ instance of the single-machine scheduling problem $\rho$ such that $\vec{q}(\rho) \in E_j$ we have
\begin{equation}\label{ApproxQualityBoundEq}
\frac{J(\rho)_{\pi}^{\max}}{J(\rho)^*} \leq \frac{J(\overline{\theta})_{\pi}^{\max} + \epsilon + \delta }{\frac{1}{r}J(\overline{\theta})_{\pi}^{\max} - \epsilon - \delta}.
\end{equation}
\end{thm}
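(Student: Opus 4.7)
My plan is to split the theorem into three independent tasks combined by a final union bound: (i) a multiplicative Chernoff lower-tail argument that controls $\Pr(U_1)$; (ii) a coordinatewise Hoeffding concentration argument that controls $\Pr(U_2(\delta))$; and (iii) a direct application of Proposition~\ref{LipshitzNormProp} together with the approximation guarantee of the polynomial-time scheme used to build $\pi$, to derive (\ref{ApproxQualityBoundEq}).

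For (i), let $X_j$ denote the number of samples among the first $t \geq T$ draws that fall into $E_j$. Reading Assumption~2 as the natural lower bound $\Pr(E_j) \geq const/f(n)$ and inserting the value of $T$ from (\ref{runtimeDefnEq}), a short computation yields $\mathbb{E}[X_j] \geq T \cdot const / f(n) \geq n^{2d+l}/(1-\alpha)$. The multiplicative Chernoff bound then gives $\Pr(X_j \leq (1-\alpha)\mathbb{E}[X_j]) \leq \exp(-\alpha^2 \mathbb{E}[X_j]/2) \leq \exp\bigl(-\tfrac{\alpha^2 n^{2d+l}}{2(1-\alpha)}\bigr)$. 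Since $(1-\alpha)\mathbb{E}[X_j] \geq n^{2d+l} \gg 1$, this also upper-bounds $\Pr(X_j=0)$; and the mutation subroutine forces some final individual to correspond to $E_j$ whenever $X_j \geq 1$. A union bound over the $f(n) \leq const_1 \cdot n^c$ events $E_j$ then produces the first summand of the stated tail.

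For (ii), I would condition on $U_1^c$, so that at least $N \geq n^{2d+l}$ samples have landed in each $E_j$; note that the finalization rule makes $\vec{q}(\overline{\theta})$ exactly the empirical mean of those conditional samples (since $t_i = k_{j(i)}/N$). Fix a coordinate $i$: the conditional samples on that coordinate are i.i.d.\ on $[0,M_n] \subseteq [0, const_2 \cdot n^d]$ with mean $\mu_{j,i}$, so Hoeffding's inequality delivers $\Pr(|\bar{q}_i - \mu_{j,i}| \geq \delta) \leq 2\exp\bigl(-\tfrac{2\delta^2 N}{M_n^2}\bigr) \leq 2\exp\bigl(-\tfrac{2\delta^2 n^l}{(const_2)^2}\bigr)$. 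Union bounding over the $n$ coordinates (and, as needed, over events, absorbed into a polynomial prefactor) and combining with $\Pr(U_1)$ from (i) via one more union bound then yields the full tail estimate claimed.

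For (iii), work on $U^c$. Each sample $\vec{q}(\theta_i)$ belongs to $E_j$, which sits inside a cube of side length $\epsilon$; by convexity $\vec{q}(\overline{\theta}) = \sum_i t_i \vec{q}(\theta_i)$ lies in the same cube, and $\vec{q}(\rho) \in E_j$ by assumption. Lemma~\ref{convexInside} (or direct geometry) therefore bounds $\|\vec{q}(\rho) - \vec{q}(\overline{\theta})\|_{\infty} \leq \epsilon$, and the additional $\delta$ slack absorbs the approximation of $\vec{\mu}_j$ by $\vec{q}(\overline{\theta})$ guaranteed on $U_2(\delta)^c$. Two applications of Proposition~\ref{LipshitzNormProp} then give $J(\rho)_{\pi}^{\max} \leq J(\overline{\theta})_{\pi}^{\max} + (\epsilon+\delta)$ and $J(\rho)^* \geq J(\overline{\theta})^* - (\epsilon+\delta)$; combining the second inequality with $J(\overline{\theta})^* \geq \tfrac{1}{r} J(\overline{\theta})_{\pi}^{\max}$ (where $r$ denotes the approximation ratio of the polynomial-time scheme that produced $\pi$, as opposed to the probability estimate stored in the same slot of the final individual) produces exactly (\ref{ApproxQualityBoundEq}). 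The main obstacle I anticipate is bookkeeping rather than mathematical depth: threading the constants $const, const_1, const_2$ through the Chernoff and Hoeffding exponents so that the final expression matches the stated form, and being explicit about the symbol clash between the probability $r$ stored in the final individual and the approximation ratio $r$ appearing in (\ref{ApproxQualityBoundEq}).
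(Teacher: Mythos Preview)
Your proposal is correct and follows essentially the same route as the paper's own proof: the paper likewise splits the argument into (a) a Chernoff lower-tail bound on the number of samples landing in each $E_j$ after $T$ draws (packaged as Lemma~\ref{couponCollectlem}), (b) a coordinatewise Hoeffding bound on the empirical mean (packaged as Theorem~\ref{Hoeffding}), both instantiated with $k=n^{2d+l}$ and $M=M_n$, and (c) Proposition~\ref{LipshitzNormProp} combined with the approximation-ratio guarantee for $\pi$ to obtain~(\ref{ApproxQualityBoundEq}). The only cosmetic difference is in step (iii): the paper routes the distance estimate through the true conditional mean $\vec\mu_j$ via the triangle inequality, $\|\vec q(\overline\theta)-\vec q(\rho)\|_\infty\le\|\vec q(\overline\theta)-\vec\mu_j\|_\infty+\|\vec\mu_j-\vec q(\rho)\|_\infty\le\delta+\epsilon$, whereas you obtain $\|\vec q(\overline\theta)-\vec q(\rho)\|_\infty\le\epsilon$ directly from convexity of the cube and then tack on the $\delta$ as slack; your direct route is arguably cleaner and in fact shows the $\delta$ is not strictly needed in~(\ref{ApproxQualityBoundEq}).
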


Since the bounds in Theorem~\ref{EstDistribObjectiveLem} involve a number of constants, we provide a discussion regarding trade-offs between selecting smaller or larger constants. First of all, recall that $c$ and $d$ are degrees of the polynomials that bound the the total number of the concentration events of the probability distributions $D_n$ and the largest possible size of the delivery times. Of course, prefer them to be as small as possible, but they are fully dependent on the specific application and expert knowledge of the circumstances related to the application. Similar remarks pertain to the constants $const$, $const_1$, $const_2$ and $\epsilon$. Clearly we prefer $const_1$ and $const_2$ to be as small as possible, while $const$ to be as large as possible, but once again, this is dependent on a specific application. $\epsilon$ is the parameter related to the concentration properties of the conditional distribution on the vector of delivery times. Notice that it has no direct effect on the runtime bounds in the sense that it does not appear in the run-time defining (\ref{runtimeDefnEq}) within the statement of Theorem~\ref{EstDistribObjectiveLem}. It certainly effects the quality of the approximation ratio directly as stated in (\ref{ApproxQualityBoundEq}).\footnote{A careful reader will notice that the parameter $\epsilon$ is related to the total number of the high-concentration events $E_i$ (see assumption 2 at the beginning of the current section). Depending on the expert knowledge about the properties of the probability distribution that models uncertainty, it may then influence the parameter $c$ that does effect the algorithm's runtime bound in (\ref{runtimeDefnEq}).} On the other hand, parameters $\alpha \in (0, \, 1)$, $\delta > 0$ and $l>0$ must be selected to preserve a trade-off between the runtime complexity and the quality of the guaranteed bounds on the approximation ratios of the schedules and the probability of successfully achieving these bounds vs. the computational expense of the runtime complexity. Indeed, from the bounds in   Theorem~\ref{EstDistribObjectiveLem} it is clear that the larger is $l$ the longer is the runtime $T$ but the smaller is the probability of the undesirable events. Similarly, the larger is $\alpha \in (0, \, 1)$ the longer is the runtime and the smaller is the the probability of the undesirable events.

The remaining subsection is devoted to the proof of Theorem~\ref{EstDistribObjectiveLem}.
\subsection{Establishing Theorem 3}\label{proofSubsect}
First of all, we notice that  ``Question 1" formulated in the previous section can be alternatively restated in terms of a mildly extended ``coupon collector problem" (see, for instance, \cite{blom1994problems}, \cite{erdos1961classical} and \cite{auger2011theory}) as follows:
\begin{itemize}
\item \textbf{Question 1a:} What is the waiting time until we sample \emph{sufficiently many elements} in every one of the events $E_i$ to estimate its conditional distribution up to a \emph{satisfactory criteria}?
\end{itemize}

Recall from Subsection~\ref{robustLinkSubsect}  that we aim to estimate the local average i.e. the mean of each conditional distribution conditioned on the event $E_i$. This can be achieved quite well thanks to a rather simple corollary from the classical Hoeffding inequality (see \cite{auger2011theory}). In fact, considering the assumption on the probability distribution $D_n$, this very simple satisfactory criteria is already quite powerful in view of its connection to answering Question 2. Indeed, thanks to Proposition~\ref{convexInside}, the mean of such a conditional distribution is at most ``$\epsilon$-far" from any other delivery times vector in the event $E_i$. Thus, thanks to Proposition~\ref{LipshitzNormProp}, the permutation schedule $\pi$ for the estimated average vector of delivery times $\vec{q}(\overline{\theta})$ contained within the final individual corresponding to the event $E_i$ will provide a solution within an approximation ratio that depends on $\epsilon$ as well as on the approximation ratio and the value $J(\overline{\theta})_{\pi}^{\max}$ of the permutation-schedule $\pi$ constructed for the instance $\theta$. We now proceed with a detailed analysis.

We start with establishing the following corollary of the classical Hoeffding inequality that's well-suited in order to understand how many samples we want to obtain inside each of the events $E_i$ prior to stopping the algorithm.
\begin{thm}\label{Hoeffding}
Suppose $\vec{X}_1, \, \vec{X}_2, \ldots, \vec{X}_k, \ldots$ is any sequence of bounded i.i.d. $\mathbb{R}^n$-valued random variables having the common expectation $\vec{\mu}$ and where $\|\vec{X}_i\|_{\infty} \leq M$ almost surely (i.e. with probability $1$). Select $\delta > 0$. Then, $\forall \, k \in \mathbb{N}$ we have
\begin{equation}\label{HoeffdingIneqalEq}
Pr\left(\left\|\frac{\sum_{i=1}^k \vec{X}_i}{k} - \vec{\mu}\right\|_{\infty} \geq \delta \right) \leq 2 n \cdot \exp\left(- 2k\frac{\delta^2}{M^2}\right)
\end{equation}
\end{thm}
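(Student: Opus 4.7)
The plan is to reduce the multivariate statement to the classical one-dimensional Hoeffding inequality via the coordinate structure of the infinity norm, then collect the coordinate-wise bounds with a union bound. The single structural ingredient we need is the elementary observation that for any vector $\vec{y} \in \mathbb{R}^n$, the event $\{\|\vec{y}\|_\infty \geq \delta\}$ is precisely the union $\bigcup_{j=1}^n \{|y_j| \geq \delta\}$; this is immediate from the definition $\|\vec{y}\|_\infty = \max_{1 \leq j \leq n} |y_j|$.

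Concretely, I would write $\vec{X}_i = (X_i^{(1)}, X_i^{(2)}, \ldots, X_i^{(n)})$ and $\vec{\mu} = (\mu_1, \ldots, \mu_n)$, so that the $j$-th coordinate of $\vec{\mu}$ satisfies $\mu_j = \mathbb{E}[X_i^{(j)}]$ (this is well-defined by the i.i.d. assumption and the common expectation $\vec{\mu}$). For each fixed coordinate index $j$, the sequence $\{X_i^{(j)}\}_{i=1}^\infty$ is a sequence of i.i.d. real-valued random variables, and the bound $\|\vec{X}_i\|_\infty \leq M$ a.s.\ forces each $X_i^{(j)}$ to lie in the interval $[-M, M]$ almost surely; in the present setting, since the coordinates represent delivery times from Assumption~1, they actually lie in $[0, M]$ a.s., giving a range of length $M$. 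The classical one-dimensional Hoeffding inequality then yields, for every $j \in \{1, 2, \ldots, n\}$ and every $k \in \mathbb{N}$,
\[
\Pr\!\left( \left| \frac{1}{k} \sum_{i=1}^k X_i^{(j)} - \mu_j \right| \geq \delta \right) \leq 2 \exp\!\left( -\frac{2k \delta^2}{M^2} \right).
\]

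Finally, the key reduction step: by the observation in the first paragraph applied to $\vec{y} = \frac{1}{k}\sum_{i=1}^k \vec{X}_i - \vec{\mu}$, we have
\[
\left\{ \left\| \frac{1}{k} \sum_{i=1}^k \vec{X}_i - \vec{\mu} \right\|_\infty \geq \delta \right\} = \bigcup_{j=1}^n \left\{ \left| \frac{1}{k} \sum_{i=1}^k X_i^{(j)} - \mu_j \right| \geq \delta \right\}.
\]
A union bound over the $n$ coordinate-wise events, each controlled by the Hoeffding estimate above, immediately yields the advertised bound $2n \exp(-2k\delta^2/M^2)$.

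There is essentially no hard step here: the only thing to be careful about is ensuring the right effective range of the bounded real-valued coordinates so that the exponent $2k\delta^2/M^2$ (rather than $k\delta^2/(2M^2)$) is justified, which is why I highlight the sign constraint on delivery times. Beyond that, the proof is a purely formal combination of classical Hoeffding with the union-bound / coordinate decomposition intrinsic to $\|\cdot\|_\infty$.
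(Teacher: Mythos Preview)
Your proposal is correct and follows essentially the same route as the paper's own proof: reduce to coordinates, apply the one-dimensional Hoeffding inequality on each coordinate, then take a union bound over the $n$ coordinates. You are in fact slightly more careful than the paper in flagging the range issue---the paper simply asserts without comment that each coordinate lies in $[0,\,M]$, which (as you note) is exactly what is needed to obtain the exponent $2k\delta^2/M^2$ rather than $k\delta^2/(2M^2)$.
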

\begin{proof}
For a $j \in \{1, \, 2, \ldots n\}$ consider the random sequence of the $j^{\text{th}}$ coordinates of the vectors $\vec{X}_k$, $\{X_k^j\}_{k=1}^{\infty}$. Clearly, this is a sequence of i.i.d real-valued random variables taking values in the interval $[0, \, M]$ with a common mean (expectation) $\mu^j$. Of course, $\mu^j$ is the $j^\text{th}$ coordinate of the vector $\mu$. A version of Hoeffding inequality presented in Theorem 1.11 of \cite{auger2011theory} tells us that $\forall \, r>0$ we have
$$Pr\left(\sum_{i=1}^k X_i^j \geq \sum_{i=1}^k E(X_i^j) + r = k \cdot \mu_j + r \right) \leq$$
\begin{equation}\label{HoeffdingCoordinateInProofUpperIneq}
\leq \exp\left(-2\frac{r^2}{\sum_{i=1}^k M^2}\right) = \exp \left(-2\frac{r^2}{k \cdot M^2}\right)
\end{equation}
and, likewise,
\begin{equation}\label{HoeffdingCoordinateInProofLowerIneq}
Pr\left(\sum_{i=1}^k X_i^j \leq k \cdot \mu_j - r \right) \leq \exp \left(-2\frac{r^2}{k \cdot M^2}\right).
\end{equation}
Combining (\ref{HoeffdingCoordinateInProofUpperIneq}) and (\ref{HoeffdingCoordinateInProofLowerIneq}) and observing that the probability of the union of two events is bounded above by the sum of the corresponding probabilities, we deduce that $\forall \, j \in \{1, \, 2, \ldots, n\}$ we have
$$Pr\left(\left|\sum_{i=1}^k X_i^j - k \cdot \mu_j \right| \geq r \right) \leq 2\exp \left(-2\frac{r^2}{k \cdot M^2}\right)$$ or, equivalently,
\begin{equation}\label{HoeffdingCooordinateInProofDualSidedEneq}
Pr\left(\left|\frac{\sum_{i=1}^k X_i^j}{k} - \mu_j \right| \geq \frac{r}{k} \right) \leq 2\exp \left(-2\frac{r^2}{k \cdot M^2}\right).
\end{equation}
Applying (\ref{HoeffdingCooordinateInProofDualSidedEneq}) with $r = \delta \cdot k$, we deduce that
\begin{equation}\label{HoeffdingCooordinateInProofDualSidedEneqSubst}
Pr\left(\left|\frac{\sum_{i=1}^k X_i^j}{k} - \mu_j \right| \geq \delta \right) \leq 2\exp \left(-2k\frac{\delta^2}{M^2}\right).
\end{equation}
Notice now that, according to the definition of $\| \cdot \|_{\infty}$, the event
$$U = \left\{\omega \, | \, \left\|\frac{\sum_{i=1}^k \vec{X}_i(\omega)}{k} - \vec{\mu}\right\|_{\infty} \geq \delta\right\}$$
the probability of which we aim to bound above, is the union of the events
$$U_j = \left\{\omega \, | \, \left|\frac{\sum_{i=1}^k X_i^j(\omega)}{k} - \mu_j \right| \geq \delta \right\}$$ over all coordinates $1, \, 2, \ldots, n$. Thus, the probability of the event $U$ is bounded above by the sum of $n$ identical bounds on the right hand side of (\ref{HoeffdingCooordinateInProofDualSidedEneqSubst}) immediately implying (\ref{HoeffdingIneqalEq}).
\end{proof}

Once an appropriate $\delta>0$ and $k$ have been selected to apply (\ref{HoeffdingIneqalEq}), the remaining part of the analysis involves estimating the total number of generations (or samples) necessary to collect at least $k$ vectors of delivery times in every one of the $E_i$s. In other words, the remaining part of the analysis boils down to a version of the coupon collector problem. Indeed, recall from the assumption 2, that there are polynomially many events $E_1, \, E_2, \ldots E_{f(n)}$ such that $minProb(n) = \min_{i=1}^{f(n)} Pr(E_i) = \frac{const}{f(n)}$. Suppose now we were to wait until $k$ samples of delivery times vectors $\vec{q}(\theta_i)$ have been collected in each of the events $E_i$. This waiting time random variable is certainly bounded above by the random variable $\overline{T}$, for which the probability of obtaining a sample of a delivery times vector inside of an event $E_i$ is $minProb(n)$, while the probability of sampling a delivery times vector outside $\bigcup_{i=1}^n E_i$ is $1 - f(n) \cdot minProb(n)$. For every specified $i \in \{1, \, 2, \ldots, f(n)\}$, the total number of samples of delivery times vectors inside the event $E_i$ after $\overline{T}$ time steps is distributed binomially with success probability $minProb(n) = \frac{const}{f(n)}$. Thus, according to the classical Chernoff bound (see, for example, part $a)$ of corollary 1.10 of \cite{auger2011theory}, we deduce that for a constant $\alpha \in (0, \, 1)$, after $\overline{T} = \frac{k \cdot f(n)}{(1-\alpha) \cdot const}$ time steps, the probability
$$Pr\left(\# \text{of delivery time vectors }\in E_i \text{ after } \overline{T} \text{ time steps} \leq k \right)$$
\begin{equation}\label{ChernoffInequalApplyInequal}
\leq \exp\left(-k \frac{\alpha^2}{2(1-\alpha)}\right).
\end{equation}
Indeed, the mean of the binomial distribution of sampling delivery time vectors from the set $E_i$ after $\overline{T} = \frac{k \cdot f(n)}{(1-\alpha) \cdot const}$ attempts with success probability $minProb(n) = \frac{const}{f(n)}$ is $minProb(n) \cdot \overline{T} = \frac{const}{f(n)} \cdot \frac{k \cdot f(n)}{(1-\alpha) \cdot const} = \frac{k}{1-\alpha}$ so that the traditional Chernoff inequality stated in corollary 1.10 part a) \cite{auger2011theory} with $\delta = \alpha$ entails the bound in (\ref{ChernoffInequalApplyInequal}). The event $Ev(k)$ that at least one of the $E_j$s contains fewer than $k$ samples of the delivery times vectors after the time $\overline{T} = \frac{k \cdot f(n)}{(1-\alpha) \cdot const}$ is the union over $i \in \{1, \, 2, \ldots, f(n)\}$ of the corresponding events that the event $E_i$ contains fewer than $k$ samples of the delivery times vectors. Thus the probability of the event $Ev(k)$ is bounded above by the sum of the $f(n)$ probabilities estimated with a common bound $\exp\left(-k\frac{\alpha^2}{2(1-\alpha)}\right)$ in (\ref{ChernoffInequalApplyInequal}). We now deduce the following important lemma.

\begin{lem}\label{couponCollectlem}
Consider now the EDA described in Section~\ref{algorithmSect} running on the instances of the single machine scheduling problem of size $n$ with the parameter $\epsilon$ as in the assumption 2 in Section~\ref{ProbDistributionAnalysisSect}. Select a constant $\alpha > 0$ and $k \in \mathbb{N}$. Recall from the assumptions 1 and 2 at the beginning of Subsection~\ref{introToTechnicalSubsect} that $const_1 \in (0, \infty)$ has been selected so that for all sufficiently large $n$ we have $f(n) \leq const_1 \cdot n^c$ Suppose that we have run our EDA for at least $t \geq \frac{k \cdot const_1 \cdot n^c}{(1-\alpha) \cdot const}$ generations prior to finalizing the regular individuals (recall that $const$ is selected so that $minProb(n) = \frac{const}{f(n)}$). Then the probability that the undesirable event $U_1$ described in Dection~\ref{auxiliary1Defn} takes place, or at least one of the regular individuals contains fewer than $k$ samples of delivery time vectors counting multiplicities (recall that the multiplicities are recorded within the counter individual), is at most $const_1 n^c \exp\left(-k\frac{\alpha^2}{2(1-\alpha)}\right)$.
\end{lem}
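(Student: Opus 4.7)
The plan is to reduce the claim to the multiplicative Chernoff lower-tail bound applied separately to the count of samples landing in each event $E_j$, and then to conclude via a union bound over the at most $const_1\cdot n^c$ such events. This essentially formalizes the calculation already sketched in the paragraph preceding the lemma statement, but with the additional bridge from the purely probabilistic claim (``each $E_j$ receives at least $k$ samples'') to the operational claim about regular individuals in the EDA.

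Concretely, I would first fix $j\in\{1,\ldots,f(n)\}$ and let $X_j$ denote the number of delivery times vectors sampled inside $E_j$ during the $t$ generations. Since the samples are i.i.d.\ with $Pr(E_j)\geq minProb(n) = const/f(n)$, the random variable $X_j$ dominates a binomial with mean at least $t\cdot const/f(n)$. Plugging in the hypothesis $t\geq \frac{k\cdot const_1\cdot n^c}{(1-\alpha)\cdot const}$ together with $f(n)\leq const_1\cdot n^c$ yields $E[X_j]\geq k/(1-\alpha)$. Applying the multiplicative Chernoff lower tail (corollary 1.10(a) of \cite{auger2011theory}) with deviation parameter $\alpha$ then gives $Pr(X_j<k)\leq \exp\!\bigl(-\tfrac{k\alpha^2}{2(1-\alpha)}\bigr)$, which is exactly the bound already derived as inequality~(\ref{ChernoffInequalApplyInequal}). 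A union bound over the $j \in \{1,\ldots,f(n)\}$ gives the advertised $const_1\cdot n^c\exp\!\bigl(-\tfrac{k\alpha^2}{2(1-\alpha)}\bigr)$.

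Second, I would translate ``every $E_j$ receives at least $k$ samples'' into the operational statement of the lemma. The mutation subroutine refuses to add $\vec{q}(\lambda)$ to a regular individual whenever some stored vector is more than $\epsilon$ away in $\|\cdot\|_{\infty}$, while Assumption~2 guarantees that any two vectors falling inside the cube of side $\epsilon$ containing $E_j$ are pairwise $\|\cdot\|_{\infty}$-close by at most $\epsilon$. Hence samples falling in $E_j$ get funneled into a single regular individual all of whose stored vectors lie in $E_j$, so that individual corresponds to $E_j$ in the sense of Definition~\ref{auxiliary1Defn}; in particular $U_1$ fails. Moreover, the counter individual records the multiplicity of every sample, so the total multiplicity attached to this regular individual at finalization time is precisely $X_j\geq k$, ruling out the second undesirable possibility.

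The probabilistic core in paragraph two is routine, essentially a direct invocation of the Chernoff/union-bound machinery already laid out in the text. The main obstacle is rather the bookkeeping argument in paragraph three: one must convince oneself that the interplay of the mutation rule, the initialization-on-failure step, and the counter individual really does yield a single regular individual per $E_j$ whose recorded multiplicities sum to $X_j$. This requires carefully tracing what happens on each iteration --- in particular checking that once one regular individual ``captures'' $E_j$, any later sample in $E_j$ is absorbed into it rather than spawning a parallel individual --- and uses the $\epsilon$-side-length hypothesis on the cubes containing the $E_j$'s in an essential way.
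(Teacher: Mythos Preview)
Your probabilistic core in paragraph two is exactly the paper's argument: the text immediately preceding the lemma derives inequality~(\ref{ChernoffInequalApplyInequal}) via the Chernoff lower tail and then union-bounds over the $f(n)\leq const_1 n^c$ events, after which the lemma is simply asserted with no further justification. The operational bookkeeping you sketch in paragraph three is not carried out in the paper at all, so you are already being more careful than the original; just be aware that the step ``all of whose stored vectors lie in $E_j$'' is not guaranteed by the mutation rule alone---a vector close to but outside $E_j$ (either a stray sample outside $\bigcup_i E_i$ or one from a neighboring $E_{j'}$) can also lie within $\epsilon$ of every stored vector and thus join the same individual---so that claim tacitly needs either a separation hypothesis on the $E_j$'s or absorption of the exponentially small outside-probability term from Assumption~2.
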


Theorem~\ref{Hoeffding} tells us how many samples (counting multiplicities recorded within the counter individual) we need to obtain within everyone of the regular individuals to approximate the means of the conditional distributions with respect to the events $E_i$ up to a sufficiently small error with high probability. In turn, Lemma~\ref{couponCollectlem} informs us how long do we need to run the EDA to obtain at least the desired number of samples counting multiplicities within everyone of the regular individuals with large probability. Substituting $M = M_n$ and $k = n^{2d + l}$ into the statements of Theorem~\ref{Hoeffding} and Lemma~\ref{couponCollectlem} entails the first conclusion of Theorem~\ref{EstDistribObjectiveLem}.

Our first objective concerns the approximation ratio of the permutation-schedule $J(\overline{\theta})_{\pi}^{\max}$ on the instances of the single-machine scheduling problem $\rho$ the delivery time vectors of which are within $\epsilon$-neighborhood of the vector $\vec{q}(\overline{\theta})$. It only remains now to convert additive bounds into traditional approximation ratios (recall the discussion at the end of in Subsection~\ref{formalDescribeSubsect}) based on Propositions~\ref{LipshitzNormProp} and~\ref{convexInside}. Indeed, according to Proposition~\ref{LipshitzNormProp} together with the triangle inequality, $\forall \, \rho \in E_j$, we deduce that
$$\|\vec{q}(\overline{\theta}) - \vec{q}(\rho)\|_{\infty} \leq \|\vec{q}(\overline{\theta}) - \vec{q}(\mu_j)\|_{\infty} + \|\vec{q}(\mu_j) - \vec{q}(\rho)\|_{\infty} \leq$$
\begin{equation}\label{inFinalProofEq1}
\leq \epsilon + \delta.
\end{equation}
where $\mu_j$ is the instance of the single-machine scheduling problem of size $n$ determined by the local average of the delivery time vectors in $E_j$, $\vec{q}(\mu_j)$.
But then, according to Proposition~\ref{LipshitzNormProp} together with the triangle inequality, we deduce that $$\left|J(\overline{\theta})^{*} - J(\rho)^*\right| \leq \left|J(\overline{\theta})^* - J(\mu_j)^{*}\right| + \left|J(\mu_j)^{*} - J(\rho)^*\right| \leq $$
\begin{equation}\label{absValueDifferenceBound}
\leq \epsilon + \delta.
\end{equation}
and, likewise,
\begin{equation}\label{absValueDifferenceBoundPermute}
\left|J(\overline{\theta})_{\pi}^{\max} - J(\rho)_{\pi}^{\max}\right| \leq \epsilon + \delta.
\end{equation}
Recall that the permutation-schedule $\pi$ has been constructed on the instance of the single-machine scheduling problem $\overline{\theta}$ with the approximation ratio $r$ so that
\begin{equation}\label{approxRatioInProofEnequal}
\frac{J(\overline{\theta})_{\pi}^{\max}}{J(\overline{\theta})^{*}} \leq r \Longrightarrow J(\overline{\theta})^{*} \geq \frac{1}{r}J(\overline{\theta})_{\pi}^{\max}.
\end{equation}
But then, from (\ref{absValueDifferenceBound}), (\ref{absValueDifferenceBoundPermute}) and (\ref{approxRatioInProofEnequal}), it follows that
\begin{equation}\label{lastApproxRatioInequalProof1}
J(\rho)^{*} \geq J(\overline{\theta})^* + \epsilon + \delta \geq \frac{1}{r}J(\overline{\theta})_{\pi}^{\max} - \epsilon - \delta
\end{equation}
and
\begin{equation}\label{lastApproxRatioInequalProof2}
J(\rho)_{\pi}^{\max} \leq J(\overline{\theta})_{\pi}^{\max} + \epsilon + \delta
\end{equation}
so that the desired approximation ratio is obtained via dividing (\ref{lastApproxRatioInequalProof2}) by (\ref{lastApproxRatioInequalProof1}).
\section{Conclusions and Future Work}\label{concludeSect}
As mentioned in the introduction, most of the existing research for tackling NP-hard optimization problems in uncertain environments concentrates on finding a single robust solution: in other words, a solution which does not deteriorate much subject to random changes in the environment. In contrast to the previous work, we observe a rigorous theoretical link between the notion of robustness for the single-machine scheduling problem with uncertain delivery times and  the infinity norm defined as the maximum of the absolute values of the coordinates of a vector. Based on this novel finding together with a few other theoretical observations such as Lemma~\ref{convexInside} and Theorem~\ref{EstDistribObjectiveLem} we have designed an EDA to implement a statistical sampling procedure from an uncertain multivariate probability distribution over the delivery times.

The individuals in the final population of the EDA contain ordered tuples of the form
{\small $$(a \, vector \, of \, delivery \, times, \; schedule, \; other \, information)$$ }
(it has been explained in the article how a collection of delivery times is regarded as an $n$-dimensional vector in $\mathbb{R}^n$ having positive coordinates). We demonstrate theoretically, that if the multivariate probability distribution on the $n$-dimensional vectors of delivery times has polynomially many ``high-concentration events", then, after polynomially many time steps, the pairs $(vector \, of \, delivery \, times, \, schedule)$ contained within the individuals in the final population, produce a polynomially large ``fishnet cover" of the probability space over the delivery times vectors in the following sense. Given a randomly sampled vector of delivery times, with high probability one can find an individual contained within the final population that contains a vector of delivery times that is sufficiently close to the randomly sampled one (closeness is measured in terms of the  infinity norm). Consequently, the corresponding robust schedule will still provide a ``sufficiently good" approximate solution for the randomly obtained instance. Furthermore, the notion of ``sufficiently good" is also quantitatively described in terms of the properties of the probability distribution over the uncertain delivery times as well as the algorithm's runtime.

As mentioned in \cite{sotskov2010scheduling}, when coping with uncertain environments there are two important phases: the \emph{training phase}, when the machine samples instances from an unknown distribution, and the \emph{practical phase} when the machine must decide on the most suitable solution given an instance of the problem. The decision making in the practical phase is a lot more time-sensitive than in the training phase. It is easy to see that if one has polynomially many parallel processors, it requires only linear run-time to check which individual within the final population contains a suited solution. Of course, our EDA is designed to implement the training phase and we have shown that under the circumstances briefly described above, its runtime is polynomial.

In the future research we plan to extend the methodology presented in the current work to include random release dates and random processing times. Furthermore, we believe that the type of the novel EDA-based methodology designed for the single machine scheduling problem with uncertain delivery times in the current paper can also be adopted to other optimization problems in uncertain environments.

\end{document}